\newtheorem{theorem}{Theorem}
\newtheorem{corollary}{Corollary}
\newtheorem{remark}{Remark}
\renewenvironment{proof}[1][Proof]{\par\noindent\textbf{#1. }\rmfamily}{\qed\par}
\begin{document}

\title{NeuroDOB: A Deep Neural Observer-Based Controller for Vehicle Lateral Dynamics}

\author{\vskip 1em
	Sangmin Kim,
    Taehun Kim,
    Guntae Kim,
	and Chang~Mook~Kang
       } 

\markboth{}{}%

\maketitle

\begin{abstract}
This paper proposes NeuroDOB, a deep neural network based observer controller for vehicle lateral dynamics, which replaces the conventional disturbance observer (DOB) with a deep neural network (DNN) to enhance personalized lateral control. Unlike conventional DOBs that compensate for general disturbances such as road friction variation and crosswind, NeuroDOB explicitly addresses unmodeled vehicle dynamics and driver-specific behaviors by learning the steering compensation signal from driver-in-the-loop simulations using CarSim’s embedded controller as a surrogate driver. The proposed architecture integrates NeuroDOB with a linear quadratic regulator (LQR), where the DNN outputs a delta error correction added to the baseline LQR steering input to produce the final control command. Input features to the DNN include lateral position and yaw angle errors, LQR control input. Experimental validation using a lateral dynamic bicycle model within CarSim demonstrates that NeuroDOB effectively adapts to individual driving habits, improving lateral control performance beyond what conventional LQR controller's achieve. The results indicate the potential of deep neural network based observer to enable personalized and adaptive autonomous vehicle control. 
In cognitive terms, the proposed architecture can be viewed as a dual-system control structure. The baseline LQR corresponds to System 1, a model-based, fast, and analytic reasoning layer ensuring stability. The NeuroDOB acts as System 2, a reflective, data-driven layer that learns compensation from experience and corrects the analytical bias of System 1. Together, they form an integrated decision process analogous to human intuition–reflection interaction, enabling both stability and adaptability in lateral control.
\end{abstract}

\begin{IEEEkeywords}
Vehicle lateral dynamics, Disturbance observer, Deep neural network, Linear quadratic regulator
\end{IEEEkeywords}

\section{Introduction}
\IEEEPARstart{A}{ccurate} and robust lateral control of vehicles is a fundamental factor underpinning the reliability of autonomous driving and advanced driver-assistance systems (ADAS)\cite{ref1}. As the automotive industry progressively advances towards higher levels of driving automation, the development of lateral control algorithms capable of maintaining lanes, tracking paths smoothly, and avoiding collisions with precision and stability has become not only a technical requirement, but also a decisive influence on real-world adoption and user trust in intelligent vehicles. Model-based control methodologies have been widely adopted in both research and industrial vehicle development, owing to their mathematical rigor, ease of implementation, and superior performance when the model accurately reflects the real environment\cite{ref2, ref3}.

Vehicle lateral controllers must contend with a variety of dynamic factors, including tire slip and deformation, heterogeneous road surfaces, sudden steering or load changes, and uncertainties in mass and moment of inertia\cite{ref4, ref5}. The LQR framework is a powerful tool for optimizing these control objectives. However, it is fundamentally based on the assumption that vehicle model parameters are precise and remain constant over time. In reality, vehicle dynamics change significantly due to environmental conditions, driving style, and hardware aging, frequently invalidating the assumptions underlying model-based design\cite{ref6}.

To address these limitations, DOB compensation loops have been combined with LQR controllers\cite{ref7, ref8}. The chief objective of the DOB is to compensate for predictable and measurable disturbances, thereby extending the controller’s robustness within certain bounds. However, DOB performance also relies heavily on model accuracy and delivers optimal results only when disturbances are static or repetitive\cite{ref9}.

In real-world scenarios, unexpected nonlinear dynamics, tire slip, environmental variations, and most importantly idiosyncratic driver behaviors that are difficult to capture in a model are primary causes of control performance degradation. The challenges of unmodeled dynamics and personalization cannot be fundamentally overcome by conventional DOB-LQR structures\cite{ref10, ref11, ref12}.

During mass production of commercial vehicles, controller parameters are typically calibrated to an average driver profile, meaning that diverse driver behavior is largely ignored. Consequently, some users perceive the response as “sluggish,” others as “unstable,” and overall user distrust is fostered. Individual tuning of controllers for every driver is infeasible, and thus, the personalization of vehicle lateral control remains an open research challenge\cite{ref13}.

In response to these issues, research on personalization in vehicle control has recently become more active. Approaches such as controller adaptation based on steering pattern data and driver clustering have been proposed, but these still suffer from scalability limitations and do not perform well for high-dimensional, nonlinear compensation in real-world applications\cite{ref14, ref15}.

DNN-based machine learning approaches have emerged as a promising alternative to overcome these challenges. DNNs are powerful tools capable of extracting patterns from high-dimensional signals and have recently outperformed traditional estimation and classification techniques in diverse driving, monitoring, and prediction domains\cite{ref16}. However, most DNN applications have been limited to perception, event prediction, or auxiliary signal estimation, with very few examples of DNNs directly integrated into observer-compensation structures, especially for real-time driver-style adaptation.

To address these studies, this paper proposes NeuroDOB architecture. The core innovation is the integration of the stability and interpretability of LQR with data-driven, driver-specific DNN compensation. The DNN is designed and trained to fully replace the conventional DOB, learning driver-specific compensation signals. Training and validation are conducted using data from the CarSim embedded controller, treated as a surrogate for real drivers. The input to the DNN includes vehicle states, LQR input, and driver steering angle, while the output is the optimal compensation $\delta_{\mathrm{c}}$.

Through this architecture, the DNN learns and reproduces real-time, driver-specific steering patterns that conventional observers or rule-based adaptation cannot capture. As such, the controller not only compensates for modeling errors and disturbances, but also generates steering compensation signals that follow the behavioral signature of each driver, enabling true “driver-vehicle-controller” cooperation in real time.

The validity of NeuroDOB is demonstrated through comprehensive experimental scenarios employing a lateral dynamic bicycle model within the CarSim environment. Furthermore, additional experimental analyses utilizing real-world driving data from production vehicles are performed, thereby verifying that the proposed methodology can be effectively deployed in practical automotive applications.

The main contributions of this paper are as follows:
\begin{itemize}
    \item We systematically reveal the limitations of existing LQR and DOB-based lateral control in adapting to driver behavior and unmodeled dynamics, and propose a fundamentally new architecture to overcome these challenges.
    \item We develop and experimentally verify NeuroDOB structure centered on real-time driver-specific adaptation.
    \item We provide comprehensive experimental results demonstrating that NeuroDOB delivers superior tracking accuracy and personalized adaptation across different drivers and scenarios, and discuss strategies for practical implementation.
\end{itemize}

\section{System Architecture and Modeling}

Fig.~\ref{fig:NeuroDOB framework architecture} illustrates the overall NeuroDOB framework, which integrates a traditional LQR controller with a NeuroDOB for driver-personalized lateral control in the CarSim environment.
At each time step, the vehicle's lateral and yaw tracking errors and their derivatives, together with the baseline LQR steering command, are provided as inputs to the NeuroDOB module.
The NeuroDOB processes these signals and generates a compensation term $\delta_{\mathrm{c}}$, which reflects driver-specific steering behavior and unmodeled dynamics.
This compensation is algebraically added to the LQR command, resulting in the final steering command $\delta_{\mathrm{f}}$ that is applied to the vehicle.

This architecture enables closed-loop error correction and real-time adaptation to both modeling uncertainties and individual driver preferences.
By directly learning from integrated simulation or real driver data, the NeuroDOB augments the stability and interpretability of the LQR structure with highly flexible data-driven compensation.

\begin{figure}[!ht]
    \centering
    \includegraphics[width=0.45\textwidth]{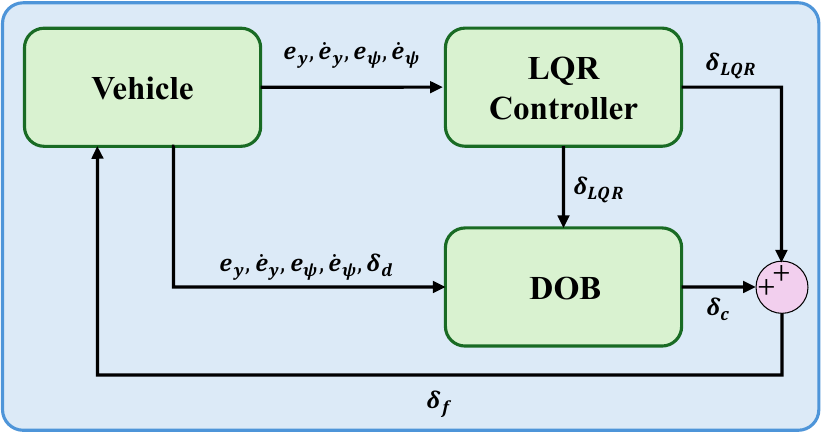}
    \caption{Basic LQR and DOB system architecture.}
    \label{fig:NeuroDOB framework architecture}
\end{figure}

The details of the neural network structure, training procedure, and real-time compensation mechanism are thoroughly described in the following sections.

\subsection{Lateral Bicycle Model}

To describe the vehicle lateral dynamics, a linearized bicycle model, widely adopted in vehicle control literature \cite{ref17, ref18}, is employed.  
The state vector is defined as
\begin{equation}
    x = \left[e_y, \dot{e}_y, e_{\psi}, \dot{e}_{\psi}\right]^{\top}
\end{equation}
where $e_y$ denotes the lateral offset error from the lane center at the vehicle’s center of gravity (CG), $\dot{e}_y$ is its time derivative, $e_{\psi}$ is the yaw angle error, and $\dot{e}_{\psi}$ represents the yaw rate error.

The continuous-time lateral error dynamics of the vehicle are represented in the state-space form as
\begin{equation} \label{eq:cont_sys}
    \dot{x} = A x + B u + B_2 \dot{\psi}_{des}
\end{equation}
where $u$ is the steering input, and $\dot{\psi}_{des}$ denotes the desired yaw rate feedforward term.  
The system matrices $A$, $B$, and $B_2$ characterize the continuous-time vehicle lateral dynamics.  
The continuous-time model is discretized with a sampling time $T_s$ using the forward Euler method~\cite{ref19}, yielding the discrete-time matrices $\Phi$, $\Gamma$ and $\Gamma_2$. This provides a simple yet effective approximation of the continuous-time system behavior within each sampling interval and forms the discrete-time formulation used in the stability analysis of Theorem~1.

\begin{equation}
A = 
\begin{bmatrix}
0 & 1 & 0 & 0 \\
0 & a_{22} & a_{23} & a_{24} \\
0 & 0 & 0 & 1 \\
0 & a_{42} & a_{43} & a_{44}
\end{bmatrix},
\quad
B = 
\begin{bmatrix}
0 \\ b_{21} \\ 0 \\ b_{41}
\end{bmatrix},
\quad
B_2 = 
\begin{bmatrix}
0 \\ b_{2,21} \\ 0 \\ b_{2,41}
\end{bmatrix}
\end{equation}

where the matrix elements are given as
\[
\begin{aligned}
a_{22} &= -\tfrac{2\,(C_{af}+C_{ar})}{m\,V_x}, 
    &\quad a_{23} &= \tfrac{2\,(C_{af}+C_{ar})}{m}, \\[6pt]
a_{24} &= -\tfrac{2\,(C_{af}l_f - C_{ar}l_r)}{m\,V_x},
    &\quad a_{42} &= -\tfrac{2\,(C_{af}l_f - C_{ar}l_r)}{I_z\,V_x}, \\[6pt]
a_{43} &= \tfrac{2\,(C_{af}l_f - C_{ar}l_r)}{I_z},
    &\quad a_{44} &= -\tfrac{2\,(C_{af}l_f^2 + C_{ar}l_r^2)}{I_z\,V_x}, \\[6pt]
b_{21} &= \tfrac{2\,C_{af}}{m},
    &\quad b_{41} &= \tfrac{2\,C_{af}l_f}{I_z}, \\[6pt]
b_{2,21} &= -\tfrac{2(C_{af}l_f - C_{ar}l_r)}{m\,V_x} - V_x,
    &\quad b_{2,41} &= -\tfrac{2(C_{af}l_f^2 + C_{ar}l_r^2)}{I_z\,V_x}.
\end{aligned}
\]

Here, $m$ is the vehicle mass, $I_{z}$ denotes the yaw moment of inertia, $l_f$ and $l_r$ are the distances from the vehicle’s center of gravity to the front and rear axles, respectively, $C_{af}$ and $C_{ar}$ are the cornering stiffness coefficients of the front and rear tires, and $V_x$ represents the constant longitudinal velocity of the vehicle.

\begin{figure}[!ht]
    \centering
    \includegraphics[width=0.35\textwidth]{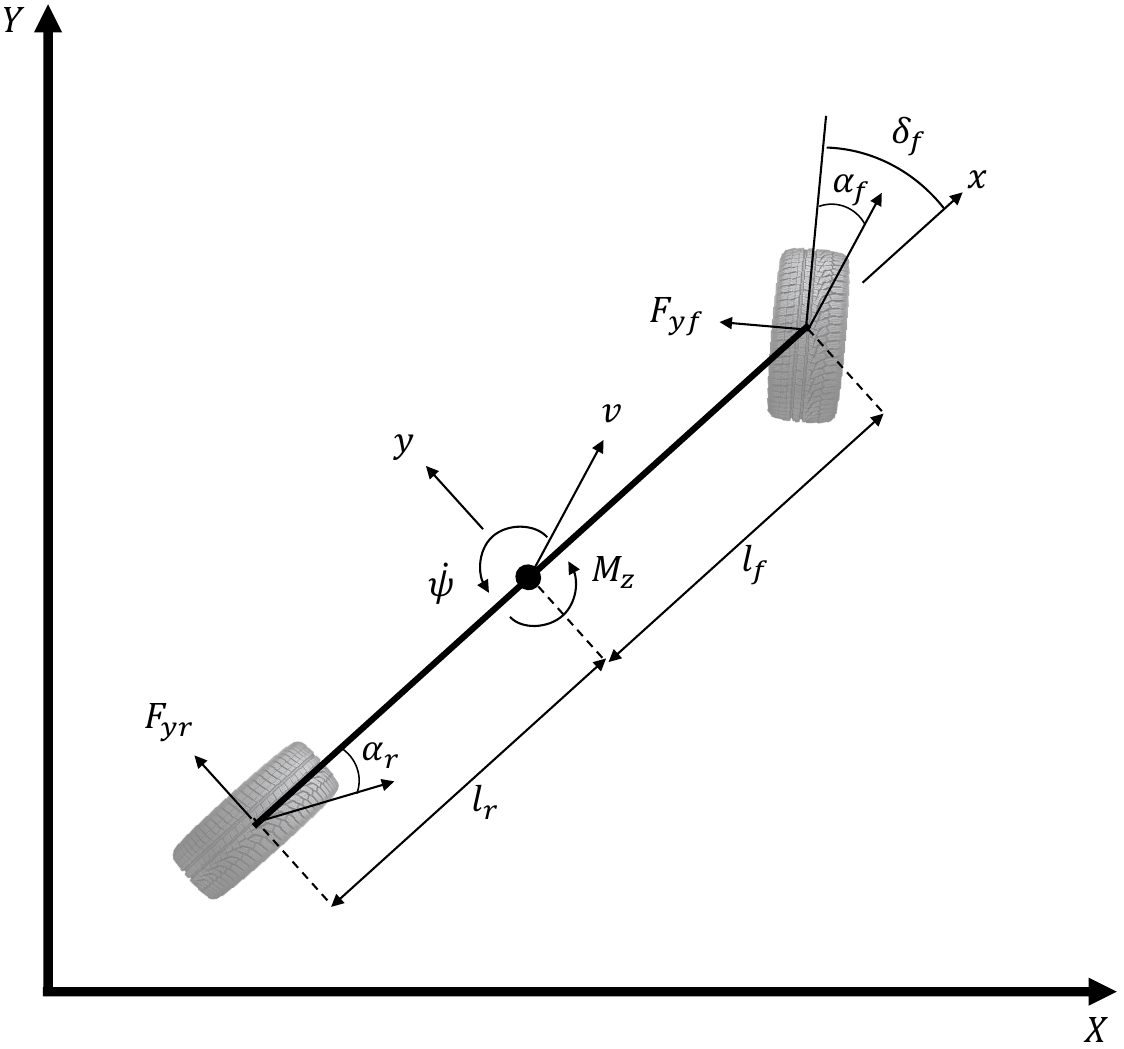}
    \caption{Lateral dynamic bicycle model.}
    \label{fig:bicycle model}
\end{figure}

\subsection{LQR Controller Design}
Based on the state-space model described in the previous subsection, a standard Linear Quadratic Regulator (LQR) controller is designed for lateral trajectory tracking. The LQR approach is widely used in vehicle control due to its optimality and ease of implementation, and it serves as the baseline controller in this study\cite{ref20}. The objective of the LQR controller is to minimize the vehicle's lateral position and yaw angle errors while simultaneously ensuring system stability and efficient use of control inputs\cite{ref21}. The cost function of the LQR controller is constructed as follows:
\begin{equation}
    J = \int_{0}^{\infty} \left( X_k^{T} Q X_k + U_k^{T} R U_k \right) \, dk
    \label{eq:lqr_cost}
\end{equation}
where \( X_k \) is the state variable of the system, \( U_k \) is the control variable of the system, and \( Q \) and \( R \) represent the state error weighting matrix and the control weighting matrix of the controller, respectively. Here, \( Q \) is a positive definite or semi-positive definite matrix, and \( R \) is a positive definite matrix. 
Within the overall NeuroDOB architecture, the LQR controller provides the foundational control performance, which is enhanced by the data-driven compensation signals for improved robustness and driver-specific adaptation.

\subsection{Problem Formulation}

While the LQR controller effectively regulates lateral error, its performance degrades in the presence of unmodeled dynamics, parameter uncertainties, and driver-specific behavioral variations. Conventional DOB partially compensate for external disturbances such as road friction irregularities or crosswinds but fail to adapt to complex and individualized driving patterns.

Our objective is to develop an observer-based compensation scheme that can learn personalized correction signals to augment the baseline LQR input. Specifically, NeuroDOB will replace the conventional DOB, trained to map measured error states and control inputs, along with surrogate driver steering signals, to a delta error compensation $\delta_{\mathrm{c}}$.
By integrating such a data-driven observer with the LQR controller, the system aims to adapt to driver-specific steering habits and unmodeled vehicle behaviors, thereby enhancing lateral control precision and personalization.

\section{Proposed NeuroDOB Design}
This section presents the overall design and training process of the proposed NeuroDOB.
First, the theoretical structure and formulation of the NeuroDOB are described,
followed by the training methodology used to learn the compensatory control behavior. The architecture of the NeuroDOB is illustrated in Fig.~\ref{fig:DNN architecture}.

\begin{figure}[!ht]
    \centering
    \includegraphics[width=0.49\textwidth]{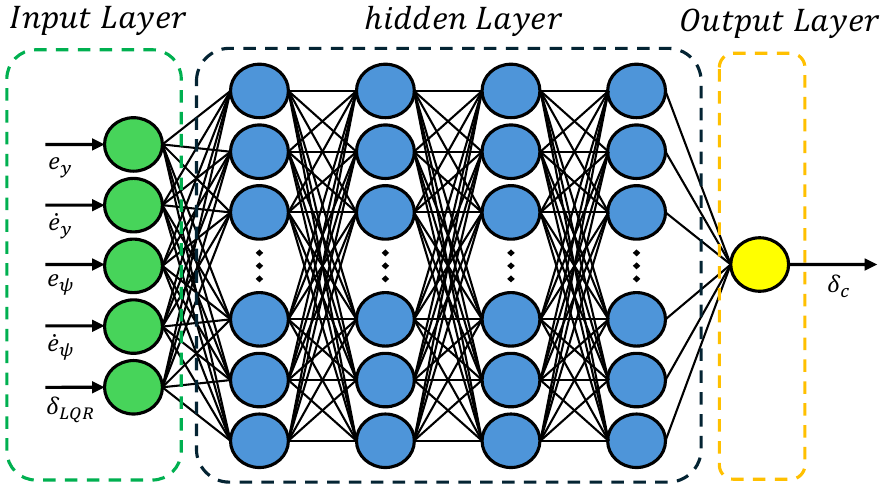}
    \caption{NeuroDOB architecture.}
    \label{fig:DNN architecture}
\end{figure}

\subsection{Design and Theoretical Framework}
The proposed NeuroDOB module enhances the baseline LQR controller by providing a real-time, driver-personalized steering compensation signal. 
Unlike conventional DOBs that rely on residuals between a nominal vehicle model and measured states, 
NeuroDOB leverages a deep neural network trained to imitate the compensation behavior of a surrogate driver. 
This enables the system to capture unmodeled dynamics and driver-specific control characteristics that classical DOBs cannot explicitly represent.

The NeuroDOB input vector at time step $k$, denoted $s[k] \in \mathbb{R}^n$, is defined as
\begin{equation}
    s[k] = \begin{bmatrix}
    e_y(k) & \dot{e}_y(k) & e_{\psi}(k) & \dot{e}_{\psi}(k) & \delta_{\mathrm{LQR}}(k)
    \end{bmatrix}^{\top}
\end{equation}
where $e_y$ and $\dot{e}_y$ are the lateral position error and its derivative, 
$e_{\psi}$ and $\dot{e}_{\psi}$ are the yaw error and its derivative, 
and $\delta_{\mathrm{LQR}}$ is the steering command generated by the baseline LQR controller.
Here, $\mathbb{R}^n$ represents an $n$-dimensional real vector space, where $n$ denotes the number of state variables or features considered in the model. 
Thus, $s[k]$ serves as the feature input vector to the neural network that estimates the compensatory steering command.

The network output is the compensation term $\delta_{\mathrm{c}}(k)$, which augments the baseline control command:
\begin{equation}
    \delta_{\mathrm{f}}(k) = \delta_{\mathrm{LQR}}(k) + \delta_{\mathrm{c}}(k)
\end{equation}
where $\delta_{\mathrm{f}}$ denotes the final steering command.

During training, the supervised target label is defined as the deviation between the surrogate driver’s actual steering and the baseline LQR command:
\begin{equation}
    \delta_{\mathrm{c}}(k) = \delta_{\mathrm{d}}(k) - \delta_{\mathrm{LQR}}(k)
\end{equation}
where $\delta_{\mathrm{d}}$ represents the driver's actual steering input.
The NeuroDOB is implemented as a fully connected deep neural network (DNN)\cite{ref22, ref23, ref24}.
In general, an $L$-layer feedforward DNN performs the recursive mapping:
\begin{equation}
\begin{aligned}
    h^{(0)} &= s[k] \in \mathbb{R}^{n}, \\
    z^{(\ell)} &= W^{(\ell)} h^{(\ell-1)} + b^{(\ell)}, \quad \ell = 1, \dots, L, \\
    \hat{z}^{(\ell)} &= \mathrm{BN}(z^{(\ell)}), \\
    h^{(\ell)} &= \mathrm{Dropout}\!\left(\tanh(\hat{z}^{(\ell)})\right), \quad \ell = 1, \dots, L-1, \\
    \bar{\delta}_c &= f_{\theta}(s[k]) = W^{(L)} h^{(L-1)} + b^{(L)}.
\end{aligned}
\label{eq:dnn_mapping}
\end{equation}
Here, $W^{(\ell)}$ and $b^{(\ell)}$ are the learnable weight matrices and bias vectors,
$\mathrm{BN}(\cdot)$ denotes batch normalization, and $\tanh$ is the nonlinear activation function.
The trainable parameter set is defined as $\theta = \{W^{(\ell)}, b^{(\ell)}\}_{\ell=1}^{L}$\cite{ref25}.
The implemented architecture consists of four hidden layers, each with 64 neurons,
with batch normalization and 0.2 dropout applied after each hidden layer.

\begin{theorem}
Consider the discrete-time vehicle lateral error dynamics:
\begin{equation}
    x[k+1] = \Phi x[k] + \Gamma \delta_{\mathrm{LQR}}[k] + \Gamma \delta_{\mathrm{c}}[k] + \Gamma_2 \dot{\psi}_{des}
\end{equation}
where $\delta_{\mathrm{LQR}}[k] = -K x[k]$ is the baseline LQR control law and $\delta_{\mathrm{c}}[k] = f_{\theta}(s[k])$ is the NeuroDOB compensation.  
If the learned compensation term $f_{\theta}(s[k])$ is bounded such that
\begin{equation}
    \| f_{\theta}(s[k]) \| \leq \epsilon_1,
\end{equation}
for some $\epsilon_1 > 0$, then the closed-loop system remains \textit{practically stable}, i.e.,
\begin{equation}
    \limsup_{k \to \infty} \| x[k] \| \leq \eta(\epsilon_1),
\end{equation}
where $\eta(\epsilon_1)$ is a class-$\mathcal{K}$ function dependent on $\epsilon_1$.  
\end{theorem}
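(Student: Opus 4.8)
The plan is to treat the NeuroDOB output and the feedforward term as a bounded additive perturbation acting on the nominal LQR-controlled loop, and then run the standard discrete-time input-to-state-stability argument with a quadratic Lyapunov function. First I would substitute the control law to obtain the closed-loop recursion $x[k+1] = \Phi_{cl}\,x[k] + w[k]$, where $\Phi_{cl} = \Phi - \Gamma K$ is the closed-loop state matrix and $w[k] = \Gamma f_{\theta}(s[k]) + \Gamma_2\dot{\psi}_{des}$ is the disturbance channel. By the LQR design on the stabilizable discretized pair $(\Phi,\Gamma)$, $\Phi_{cl}$ is Schur, so there exists a symmetric positive-definite $P$ solving the discrete Lyapunov equation $\Phi_{cl}^{\top}P\Phi_{cl} - P = -Q_0$ for any chosen positive-definite $Q_0$ (for instance $Q_0 = Q + K^{\top}RK$). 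Because the $\tanh$ activations in \eqref{eq:dnn_mapping} are globally bounded, $f_{\theta}$ is automatically bounded, so the hypothesis $\|f_{\theta}(s[k])\|\le\epsilon_1$ holds uniformly along every trajectory and $\|w[k]\| \le \|\Gamma\|\,\epsilon_1 + \|\Gamma_2\|\,\bar{r} =: \bar{w}$, with $\bar{r}$ a bound on the reference curvature term $\dot{\psi}_{des}$.

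The core step is the one-step decrease of $V(x) = x^{\top}Px$. Expanding $\Delta V[k] = V(x[k+1]) - V(x[k])$ along the recursion gives $\Delta V[k] = -x[k]^{\top}Q_0 x[k] + 2\,x[k]^{\top}\Phi_{cl}^{\top}P\,w[k] + w[k]^{\top}P\,w[k]$. I would bound the cross term by Young's inequality, $2\,x^{\top}\Phi_{cl}^{\top}Pw \le \alpha\|x\|^2 + \alpha^{-1}\|\Phi_{cl}\|^2\|P\|^2\|w\|^2$, and choose $\alpha\in(0,\lambda_{\min}(Q_0))$ so that $-x^{\top}Q_0 x + \alpha\|x\|^2 \le -c\|x\|^2$ with $c = \lambda_{\min}(Q_0)-\alpha>0$. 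This gives $\Delta V[k] \le -c\|x[k]\|^2 + C_w$ where $C_w = (\alpha^{-1}\|\Phi_{cl}\|^2\|P\|^2 + \|P\|)\,\bar{w}^2$. Using $\|x\|^2 \ge V(x)/\lambda_{\max}(P)$ converts this into the scalar comparison inequality $V[k+1] \le (1-\rho)\,V[k] + C_w$ with $\rho = \min\{c/\lambda_{\max}(P),1\}\in(0,1]$. Iterating and summing the geometric series yields $V[k] \le (1-\rho)^k V[0] + C_w/\rho$, hence $\limsup_{k\to\infty}V[k] \le C_w/\rho$, and via $\|x\|^2 \le V(x)/\lambda_{\min}(P)$ we obtain $\limsup_{k\to\infty}\|x[k]\| \le \sqrt{C_w/(\rho\,\lambda_{\min}(P))} =: \eta(\epsilon_1)$. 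Since $C_w$ is a quadratic polynomial in $\bar{w} = \|\Gamma\|\epsilon_1 + \|\Gamma_2\|\bar{r}$ with nonnegative coefficients, $\eta$ is continuous and strictly increasing in $\epsilon_1$.

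I expect the main obstacle to be reconciling the feedforward term $\Gamma_2\dot{\psi}_{des}$ with the claimed dependence of the ultimate bound on $\epsilon_1$ alone: a genuine class-$\mathcal{K}$ function must vanish at the origin, yet the curvature-induced contribution to $C_w$ does not disappear as $\epsilon_1\to 0$. I would resolve this in one of three ways, and state whichever the authors prefer: (i) restrict the statement to the pure regulation case $\dot{\psi}_{des}\equiv 0$, in which case $\eta(0)=0$ and $\eta$ is class-$\mathcal{K}$ as claimed; (ii) observe that in a well-posed path-tracking LQR the feedforward is designed so that its steady-state contribution to the tracking error is cancelled, leaving only a transient that is absorbed into the $(1-\rho)^k V[0]$ term; or (iii) restate the conclusion as the ISS-type ultimate bound $\limsup_{k\to\infty}\|x[k]\| \le \eta(\epsilon_1) + \beta(\bar{r})$ with both $\eta$ and $\beta$ of class-$\mathcal{K}$. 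A secondary, routine point is to verify that $Q_0$ is chosen positive definite so that $P$, $\lambda_{\min}(P)$, $\lambda_{\max}(P)$ and the constant $c$ are all well-defined and strictly positive, and that the case $\rho = 1$ (the one-step map lands in the ultimate set immediately) is consistent with the stated bound.
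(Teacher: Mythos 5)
Your proposal follows essentially the same route as the paper's proof: the same closed-loop decomposition $x[k+1]=\Phi_{cl}x[k]+$ (bounded perturbation), the same discrete Lyapunov equation yielding $V(x)=x^{\top}Px$, Young's inequality on the cross term, and a difference inequality giving an ultimate bound; your final geometric-iteration step simply makes explicit what the paper leaves as a ``standard argument.'' Your concern about the feedforward term is well founded and is exactly what the paper handles by assuming $\|\dot{\psi}_{des}[k]\|\leq\epsilon_2$ and stating the bound as $\eta(\epsilon_1,\epsilon_2)$ with the class-$\mathcal{K}$ claim restricted to the dependence on $\epsilon_1$ for fixed $\epsilon_2$ (essentially your option (iii)).
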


\begin{proof}
Let the closed-loop nominal matrix under the baseline LQR be
\[
\Phi_{cl} \mathrel{\mathop:}= \Phi - \Gamma K,
\]
and assume \(\Phi_{cl}\) is Schur (all eigenvalues inside the unit disk), which holds because \(K\) is the LQR gain for the nominal model.

Since \(\Phi_{cl}\) is Schur, there exists a unique positive definite matrix \(P=P^\top\succ 0\) solving the discrete-time Lyapunov equation
\begin{equation}
    \Phi_{cl}^\top P \Phi_{cl} - P = -Q_0,
    \label{eq:lyap}
\end{equation}
for any chosen \(Q_0 = Q_0^\top \succ 0\). Define the quadratic Lyapunov function \(V(x)=x^\top P x\)\cite{ref26, ref27}.

The closed-loop dynamics including NeuroDOB compensation and disturbance are
\[
x[k+1] = \Phi_{cl} x[k] + \Xi[k],
\]
here the aggregated perturbation term is defined as
\[
\Xi[k] = \Gamma w[k] + \Gamma_2 \dot{\psi}_{des}[k].
\]
Where we denote the aggregated perturbation
\[
w[k] \mathrel{\mathop:}= f_{\theta}(s[k]).
\]
(We also keep the explicit disturbance \(\dot{\psi}_{des}\) term separate.) By assumption let \(w[k]\) and \(\dot{\psi}_{des}\) be bounded:
\[
\|w[k]\| \le \epsilon_1,\qquad \|\dot{\psi}_{des}[k]\| \le \epsilon_2,
\]
for some \(\epsilon_1, \epsilon_2>0\).

Compute the one-step Lyapunov increment:
\begin{equation}
\begin{split}
V(x[k+1]) - V(x[k])
&= x[k]^\top (\Phi_{cl}^\top P \Phi_{cl} - P)x[k]  \\
&\quad + 2\,x[k]^\top \Phi_{cl}^\top P\,\Xi[k]  \\
&\quad + \Xi[k]^\top P\,\Xi[k],
\end{split}
\label{eq:Vexpand_simple}
\end{equation}

Using~\eqref{eq:lyap}, this becomes
\begin{equation}
\begin{split}
V(x[k+1]) - V(x[k])
&= -\,x[k]^\top Q_0 x[k] \\
&\quad + 2\,x[k]^\top \Phi_{cl}^\top P\,\Xi[k] \\
&\quad + \Xi[k]^\top P\,\Xi[k].
\end{split}
\label{eq:Vincrement}
\end{equation}

We bound the cross term and the last quadratic term using norm inequalities. Let
\[
\begin{aligned}
\alpha &\mathrel{\mathop:}= \|\Phi_{cl}^\top P \Gamma\|, \quad
\beta  \mathrel{\mathop:}= \|\Phi_{cl}^\top P \Gamma_2\|, \\
\gamma &\mathrel{\mathop:}= \|P^{1/2} \Gamma\|, \quad
\delta \mathrel{\mathop:}= \|P^{1/2} \Gamma_2\|.
\end{aligned}
\]
Then from \eqref{eq:Vincrement} we have
\begin{align*}
V(x[k+1]) - V(x[k])
&\le - \lambda_{\min}(Q_0) \|x[k]\|^2 \\
&\quad + 2\|x[k]\|(\alpha \|w[k]\| + \beta \|\dot{\psi}_{des}[k]\|) \\
&\quad + (\gamma \|w[k]\| + \delta \|\dot{\psi}_{des}[k]\|)^2.
\end{align*}

Apply the scalar inequality \(2ab \le \mu a^2 + \mu^{-1} b^2\) (valid for any \(\mu>0\)) to the cross term with \(a=\|x[k]\|\) and \(b=\alpha\|w[k]\|+\beta\|\dot{\psi}_{des}[k]\|\). Choose \(\mu = \lambda_{\min}(Q_0)/2 >0\). Then there exists constants \(c_1,c_2>0\) (depending only on \(\Phi_{cl},\Gamma,\Gamma_2,P,Q_0\)) such that
\[
\begin{aligned}
V(x[k+1]) - V(x[k])
&\le -\tfrac{1}{2}\lambda_{\min}(Q_0)\|x[k]\|^2 \\
&\quad + c_1\|w[k]\|^2 + c_2\|\dot{\psi}_{des}[k]\|^2.
\end{aligned}
\]

Using the uniform bounds \(\|w[k]\|\le\epsilon_1\) and \(\|\dot{\psi}_{des}[k]\|\le\epsilon_2\), we obtain
\[
V(x[k+1]) - V(x[k]) \le -\tfrac{1}{2}\lambda_{\min}(Q_0)\|x[k]\|^2 + c,
\]
where \(c \mathrel{\mathop:}= c_1\epsilon_1^2 + c_2\epsilon_2^2\). Rearranging gives
\[
\tfrac{1}{2}\lambda_{\min}(Q_0)\|x[k]\|^2 \le V(x[k]) - V(x[k+1]) + c.
\]

Sum both sides from \(k=0\) to \(N-1\) and telescope the left-hand side to obtain a bound on the time-average of \(\|x[k]\|^2\). Standard arguments for difference inequalities (or invoking invariance arguments for practical stability) yield that the state norm is ultimately bounded:
\[
\limsup_{k\to\infty}\|x[k]\| \le \eta(\epsilon_1,\epsilon_2),
\]
for some continuous function \(\eta(\cdot,\cdot)\) with \(\eta(0,0)=0\). In particular, for fixed \(\epsilon_2\) the bound \(\eta\) is a class-\(\mathcal{K}\) function of \(\epsilon_1\), which proves practical stability as stated.
\end{proof}

\begin{corollary}
If the learned compensation perfectly approximates the unknown disturbance, i.e., $f_{\theta}(s[k]) = -\Gamma^{-1}\Gamma_2 \dot{\psi}_{des}[k]$,  
then the closed-loop error dynamics reduce to
\begin{equation}
    x[k+1] = (\Phi - \Gamma K) x[k],
\end{equation}
and the system achieves asymptotic stability as guaranteed by the baseline LQR.
\end{corollary}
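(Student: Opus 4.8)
The plan is to prove the corollary by direct substitution into the closed-loop difference equation of Theorem~1, followed by invoking the Schur property of the nominal LQR closed loop. Starting from
\[
x[k+1] = \Phi x[k] + \Gamma \delta_{\mathrm{LQR}}[k] + \Gamma \delta_{\mathrm{c}}[k] + \Gamma_2 \dot{\psi}_{des}[k],
\]
I substitute the baseline law $\delta_{\mathrm{LQR}}[k] = -Kx[k]$ and the hypothesized compensation $\delta_{\mathrm{c}}[k] = f_{\theta}(s[k]) = -\Gamma^{-1}\Gamma_2\dot{\psi}_{des}[k]$. The compensation channel then contributes $\Gamma\delta_{\mathrm{c}}[k] = -\Gamma\Gamma^{-1}\Gamma_2\dot{\psi}_{des}[k] = -\Gamma_2\dot{\psi}_{des}[k]$, which exactly cancels the feedforward disturbance term $\Gamma_2\dot{\psi}_{des}[k]$. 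What remains is the autonomous linear recursion $x[k+1] = (\Phi - \Gamma K)x[k] = \Phi_{cl}x[k]$, which is the claimed reduced dynamics.

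For the stability conclusion I would reuse the ingredients already assembled in the proof of Theorem~1: because $K$ is the LQR gain for the nominal pair $(\Phi,\Gamma)$, the matrix $\Phi_{cl} = \Phi - \Gamma K$ is Schur, and the discrete Lyapunov equation $\Phi_{cl}^\top P \Phi_{cl} - P = -Q_0$ admits a unique $P = P^\top \succ 0$ for any $Q_0 \succ 0$. With $V(x) = x^\top P x$ the increment along the reduced dynamics is $V(x[k+1]) - V(x[k]) = -x[k]^\top Q_0 x[k] \le -\lambda_{\min}(Q_0)\|x[k]\|^2 < 0$ for $x[k]\neq 0$, so $V$ is a strict Lyapunov function and $x[k]\to 0$ exponentially. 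Equivalently, this is precisely the $\epsilon_1 = \epsilon_2 = 0$ specialization of Theorem~1, for which $\eta(0,0)=0$ and ``practical stability'' collapses to asymptotic stability, so the corollary may also be presented as an immediate consequence of the theorem.

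The one genuinely delicate point, and the step I would be most careful about, is the meaning of $\Gamma^{-1}$: here $\Gamma \in \mathbb{R}^{4\times 1}$ is a column vector corresponding to the single steering input, so it is not invertible in the ordinary sense. I would make explicit that the identity is to be read under a matching (or range) condition, namely that the feedforward disturbance is realizable through the input channel, $\Gamma_2\dot{\psi}_{des}[k] \in \mathrm{range}(\Gamma)$, in which case $\Gamma^{-1}\Gamma_2$ is understood as the Moore--Penrose expression $\Gamma^{+}\Gamma_2$ and the cancellation $\Gamma\Gamma^{+}\Gamma_2\dot{\psi}_{des} = \Gamma_2\dot{\psi}_{des}$ is exact. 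Inspecting the model structure, $\Gamma$ and $\Gamma_2$ share the zero pattern in their first and third entries, so matching reduces to a scalar proportionality between their second and fourth components; when this fails, the residual $(I - \Gamma\Gamma^{+})\Gamma_2\dot{\psi}_{des}$ is nonzero and one falls back to the ultimate bound of Theorem~1 rather than exact asymptotic stability. I would therefore either state the matching condition as an explicit hypothesis of the corollary or present the result as the idealized limiting case of the theorem.
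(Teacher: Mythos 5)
Your proposal is correct and follows exactly the argument the paper intends (the corollary is stated without an explicit proof): substitute $\delta_{\mathrm{LQR}}[k]=-Kx[k]$ and $\delta_{\mathrm{c}}[k]=f_{\theta}(s[k])$ into the closed-loop recursion so that $\Gamma\delta_{\mathrm{c}}[k]$ cancels $\Gamma_2\dot{\psi}_{des}[k]$, then invoke the Schur property of $\Phi-\Gamma K$ (equivalently, the $\epsilon_1=\epsilon_2=0$ case of Theorem~1) for asymptotic stability. Your additional observation is a genuine and worthwhile refinement rather than a deviation: since $\Gamma\in\mathbb{R}^{4\times 1}$ is a single-input column, $\Gamma^{-1}$ is not literally defined, and the hypothesis should be read under the matching condition $\Gamma_2\dot{\psi}_{des}[k]\in\mathrm{range}(\Gamma)$ with $\Gamma^{+}\Gamma_2$ in place of $\Gamma^{-1}\Gamma_2$, exactly as you state.
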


\begin{remark}
The above theorem indicates that the proposed NeuroDOB does not compromise the stability ensured by the LQR controller.  
Instead, it adaptively improves transient response and steady-state performance by compensating for unmodeled disturbances and driver-specific behaviors.  
Hence, the LQR–NeuroDOB structure can be interpreted as a stable yet adaptive control framework combining model-based and data-driven reasoning.
\end{remark}

\subsection{Training Objective and Optimization Strategy}
The learning objective of the NeuroDOB is to minimize the mean squared error (MSE) between the predicted and target compensation values:
\begin{equation}
    \mathcal{L}(\theta) 
    = \frac{1}{N} \sum_{k=1}^{N} 
    \left( \bar{\delta}_{\mathrm{c}}(k) - \delta_{\mathrm{c}}(k) \right)^2
\end{equation}
where $\theta$ represents the neural network parameters, 
and $N$ is the total number of training samples\cite{ref28}.

The model is trained using the Adam optimizer with a learning rate of $10^{-3}$ 
and $\ell_2$ regularization (weight decay) coefficient $\lambda$. 
A learning rate scheduler (ReduceLROnPlateau) is employed to enhance convergence, 
reducing the learning rate by a factor of 0.5 if the validation loss does not improve for 10 epochs.
Early stopping is also applied to terminate training when the validation loss fails to improve by more than $\Delta_{\min}=10^{-5}$ over 50 consecutive epochs.
Additionally, all input and output variables are standardized using the dataset statistics to ensure consistent scaling across both training and inference phases.

\subsection{Comparison between Conventional DOB and NeuroDOB}

The conventional disturbance observer (DOB) is a model-based estimator that reconstructs the disturbance \(d(k)\) by comparing the measured output with the nominal model response\cite{ref29, ref30}.  
The discrete-time lateral dynamics model of the vehicle can be expressed as:

\begin{equation}
    x[k+1] = \Phi x[k] + \Gamma u[k] + \Gamma_2 \dot{\psi}_{des}[k] + B_{d} d[k]
\end{equation}

The DOB estimates the disturbance by filtering the residual between the measured output and the nominal model output through a low-pass filter \(Q\):

\begin{equation}
    \hat{d}(k) = Q\big(y(k) - C_n(\Phi_n x(k-1) + \Gamma_n u(k-1))\big)
\end{equation}

The compensated control input is defined as:

\begin{equation}
    u_f(k) = u(k) - \hat{d}(k)
\end{equation}

Here, \(Q\) is a low-pass filter that determines the disturbance estimation bandwidth.  
The performance of this conventional DOB strongly depends on the accuracy of the nominal model and the design quality of the filter.  
Therefore, when model mismatch or nonlinear driver-specific behaviors exist, the disturbance compensation performance may deteriorate. These limitations motivate the development of a data-driven observer that can learn nonlinear and driver-specific disturbance patterns directly from driving data.

In contrast, the proposed NeuroDOB replaces explicit model-based residual reconstruction with a data-driven approach that directly learns the mapping from observed states and baseline controller outputs to the required compensation term.  
Formally, at each time step \(k\), the NeuroDOB defines the compensation as:

\begin{equation}
    \delta_{\mathrm{c}}(k) = f_{\theta}\big(s[k]\big)
\end{equation}

where \(s[k]\in\mathbb{R}^n\) denotes the state feature vector containing tracking errors and the baseline LQR command,  
and \(f_{\theta}(\cdot)\) is a deep neural network (DNN) parameterized by \(\theta\).  
The final command is then obtained as:

\begin{equation}
    \delta_{\mathrm{f}}(k) = \delta_{\mathrm{LQR}}(k) + \delta_{\mathrm{c}}(k)
\end{equation}

Unlike the analytical DOB structure that computes residuals through a fixed filter \(Q(z)\),  
the NeuroDOB exploits the universal function approximation capability of neural networks~\cite{ref25} to represent the complex nonlinear mapping:

\begin{equation}
    f_{\theta}: \mathbb{R}^n \;\longrightarrow\; \mathbb{R}
\end{equation}

which implicitly embeds disturbance dynamics, driver-specific behaviors, and unmodeled vehicle dynamics into the learned function space.  
The overall architecture of the proposed NeuroDOB is illustrated in Fig.~\ref{fig:NeuroDOB architecture}.

\begin{figure}[!ht]
    \centering
    \includegraphics[width=0.49\textwidth]{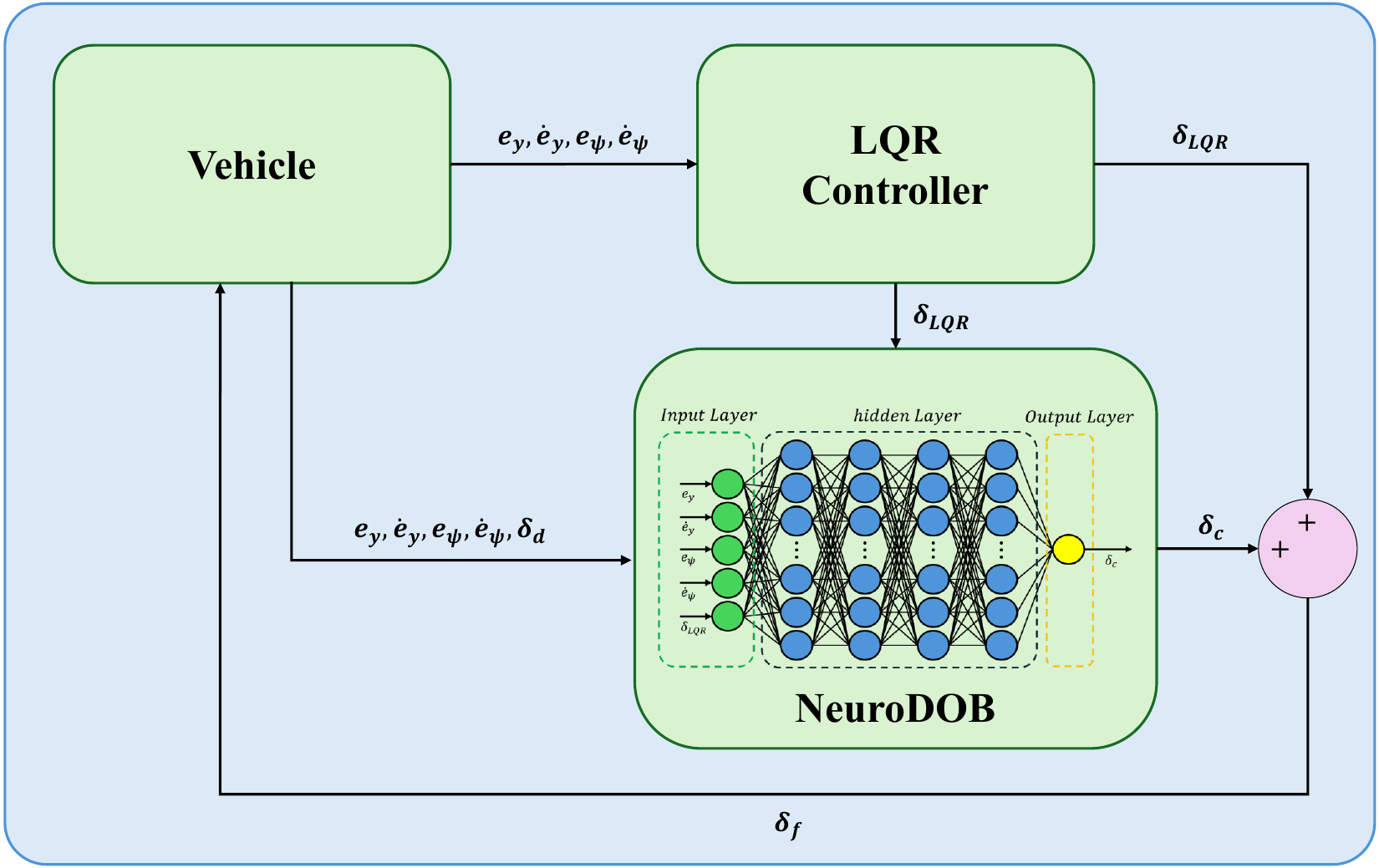}
    \caption{Architecture of the proposed NeuroDOB showing the data-driven compensation pathway integrated with the baseline LQR controller.}
    \label{fig:NeuroDOB architecture}
\end{figure}

Therefore, while the DOB focuses on disturbance estimation through model residuals and linear filtering,  
the NeuroDOB focuses on disturbance compensation learning directly from data by modeling the nonlinear mapping between error states and steering compensation.  
This data-driven approach enables more flexible and personalized control performance.

\section{Simulation setup and Result}
This section presents the simulation settings and results for three distinct test cases, each corresponding to a specific road geometry. By systematically varying the road shapes in simulation, we examine not only performance outcomes but also the underlying impact of training data diversity and road geometry on the neural network’s learning and generalization capabilities.

\subsection{Simulation setting}
All simulations are conducted using a high-fidelity CarSim simulation environment configured to model realistic vehicle dynamics and sensor feedback. The test vehicle parameters and control sampling intervals are set according to the values detailed in Section~2 and Table~\ref{tab:params}. The $V_x$ is held constant at 50\,km/h throughout all simulations, reflecting a fixed-speed scenario for robust evaluation. Three distinct road maps are utilized to evaluate the proposed controller. Road Map 1 is used for data collection and training, Road Map 2 is employed for cross-validation and generalization testing, and Road Map 3 is used for training with slight geometry and curvature differences compared to Road Map 2. Fig.~\ref{fig:road123} illustrate the layouts of these three maps. And the three experiment cases are summarized in Table~\ref{tab:case}.

\begin{table}[!ht]
\caption{Three Simulation Cases}
\centering
\captionsetup{justification=centering}
\small
\resizebox{\columnwidth}{!}{%
\begin{tabular}{|l|c|c|c|}
\hline
 & \textbf{Case 1} & \textbf{Case 2} & \textbf{Case 3} \\
\hline
Training Road Map & Road Map 1 & Road Map 1 & Road Map 3 \\
\hline
Validation Road Map & Road Map 1 & Road Map 2 & Road Map 2 \\
\hline
\end{tabular}
}
\label{tab:case}
\end{table}

\begin{table}[!ht]
    \caption{Test Vehicle Parameters}
    \centering
    \begin{tabular}{ll}
        \hline
        \textbf{Parameter}          & \textbf{Value}                \\
        \hline
        Vehicle mass $(m)$          & 1,274\,kg                    \\
        Yaw inertia $(I_{z})$      & 1,523\,kg·m$^2$              \\
        Front/rear tire distance $(l_f/l_r)$ & 1.016 /\, 1.562\,m   \\
        Cornering stiffness $(C_{af}/C_{ar})$ & 118{,}800 / 165{,}300\,N/rad \\
        Control sampling time $(T_c)$& 0.01\,s                      \\
        \hline
    \end{tabular}
    \label{tab:params}
\end{table}

\begin{figure*}[t]
    \centering
    \includegraphics[width=\textwidth]{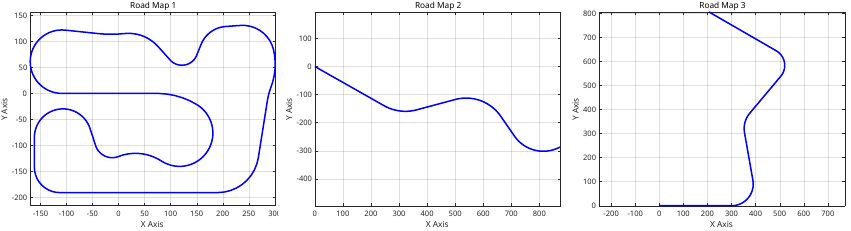}
    \caption{Road map shapes used in this study. Road Map 1 was used for training data collection and controller evaluation, Road Map 2 for independent testing, and Road Map 3 with modified geometry for additional training.}
    \label{fig:road123}
\end{figure*}

Training data is acquired by operating the CarSim simulated vehicle using the embedded driver controller on Road Map 1 and Road Map 3 for 100 seconds. During this period, four error states ($e_y$, $\dot{e}_y$, $e_{\psi}$, $\dot{e}_{\psi}$), $\delta_{\mathrm{LQR}}$, and  $\delta_{\mathrm{d}}$ are synchronously logged at each control step. The collected samples are then unified into a single dataset. Prior to training, standard preprocessing procedures are applied including synchronization, outlier removal, and normalization.

\subsection{Simulation results}
In this section, we present the key results obtained from the experimental evaluations conducted on different road maps. The performance of the baseline LQR controller, the NeuroDOB-compensated controller, and the CarSim embedded driver are compared in terms of lateral and yaw tracking errors, as well as control input behavior. These results demonstrate the effectiveness of the proposed NeuroDOB approach in enhancing lateral control accuracy and adaptability to varying road conditions. Detailed analysis follows, supported by time series plots and quantitative metrics to highlight the strengths and limitations observed.

\begin{itemize}
\item \textbf{Case 1}: The NeuroDOB is trained and validated on Road Map 1 to evaluate its fundamental learning capability and trajectory tracking performance.

\item \textbf{Case 2}: The NeuroDOB trained on Road Map 2 is validated on Road Map 1, which has different curvature and geometry, to analyze its generalization capability against varying road conditions.

\item \textbf{Case 3}: The NeuroDOB trained on Road Map 3, which has similar curvature and geometry to Road Map 2, is validated on Road Map 2 to assess the importance of road geometry characteristics and diversity of training data.
\end{itemize}

\begin{figure}[!ht]
    \centering
    \includegraphics[width=0.45\textwidth]{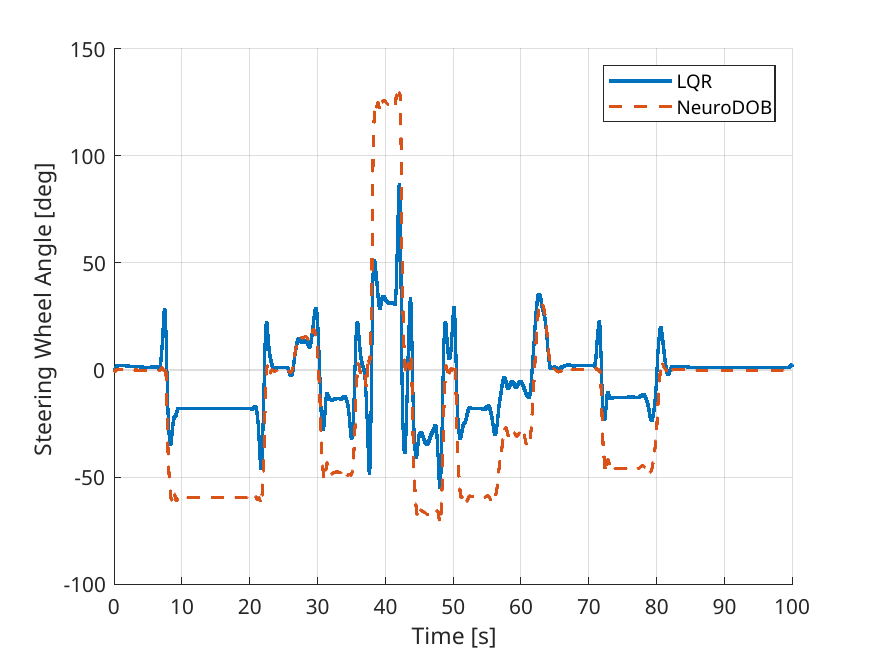}
    \caption{Comparison of steering inputs generated by the LQR controller and the NeuroDOB-compensated controller on Road Map 1.}
    \label{fig:lqr_vs_neurodob_control_input_map1}
\end{figure}

\subsubsection{Simulation Case 1}
Fig~\ref{fig:lqr_vs_neurodob_control_input_map1} compares the time series of steering inputs generated by the baseline LQR controller and by the LQR controller augmented with NeuroDOB compensation on Road Map 1. The LQR controller can exhibit delayed response and overshoot, particularly in challenging sections. In contrast, the steering input compensated by NeuroDOB adjusts more sharply and proactively in sharp curves and dynamic transition segments, thereby more closely emulating the subtle steering style of a CarSim embedded driver. Such additional compensatory commands enable the controller to respond more quickly and robustly in demanding road segments. The overlay of both signals clearly demonstrates the improvement in dynamic adaptability and agility afforded by data-driven compensation.

Fig~\ref{fig:roadmap1c0c1} presents a comprehensive comparison of the time series of lateral position error($e_y$) and heading error($e_\psi$) for three controllers CarSim embedded driver, pure LQR, and LQR with NeuroDOB compensation during operation on Road Map 1. 

\begin{figure}[!ht]
    \centering
    \includegraphics[width=0.45\textwidth]{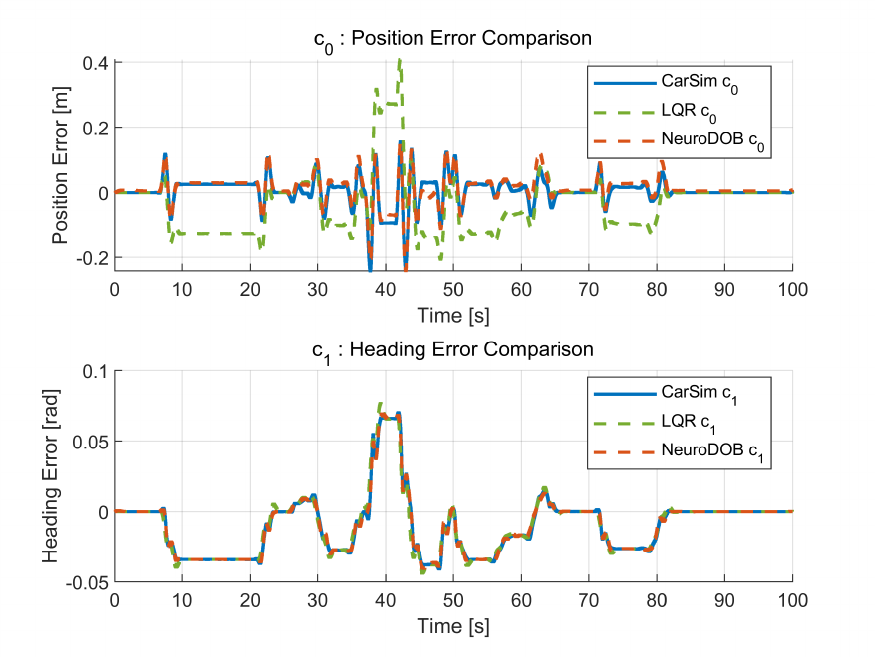}
    \caption{Comparison of $e_y$ and $e_\psi$ for CarSim embedded driver, LQR, and NeuroDOB controllers on Road Map 1.}
    \label{fig:roadmap1c0c1}
\end{figure}

The CarSim embedded driver demonstrates the lowest error magnitudes and exceptionally smooth transitions in both $e_y$ and $e_\psi$, even through curves and dynamic path changes, establishing an effective “optimal reference” for controller evaluation. The pure LQR controller maintains overall lane tracking but exhibits noticeably increased error peaks and oscillatory behaviors, particularly during sharp curves and complex transitions, highlighting its limited adaptability to unmodeled vehicle dynamics and individual driver characteristics. 

In comparison, the NeuroDOB-compensated controller significantly reduces both the magnitude and oscillations of the errors relative to pure LQR, with performance approaching that of the CarSim reference, especially in challenging segments. The NeuroDOB controller also produces steering patterns more similar to those of the CarSim embedded driver, underscoring its data-driven adaptability and effectiveness in real-time trajectory tracking.

This unified plot enables direct, quantitative assessment of error reduction and performance improvements across all methods, using the CarSim reference as a standard for evaluating both adaptive similarity and overall control enhancement.

\begin{table}[!ht]
\caption{Simulation Case 1 RMSE Comparison}
\centering
\captionsetup{justification=centering}
\small
\begin{tabular}{|l|c|c|c|}
\hline
 & \textbf{LQR} & \textbf{NeuroDOB} & \textbf{Change} \\
\hline
$e_y$ [m] & 0.1096 & 0.0150 & 86.31\% $\Downarrow$ \\
\hline
$e_\psi$ [rad] & 0.0232 & 0.0232 & 0.0\%  \\
\hline
\end{tabular}
\label{tab:road map 1 rmse_comparison}
\end{table}

Table~\ref{tab:road map 1 rmse_comparison} presents the root mean square error (RMSE) comparison between the conventional LQR controller and the proposed NeuroDOB controller for two key lateral control metrics: $e_y$ and $e_\psi$. The introduction of NeuroDOB compensation results in a significant reduction of approximately 86.3\% in $e_y$ RMSE, indicating substantial improvement in lateral path tracking accuracy. In contrast, $e_\psi$ RMSE exhibits a minimal increase of about 0.03\%, suggesting that the proposed method primarily enhances lateral control performance without compromising yaw stability. These quantitative results clearly demonstrate that the NeuroDOB compensation architecture significantly improves lateral control precision compared to the baseline LQR controller while maintaining comparable yaw performance.

From a dual-system perspective, the LQR (System 1) maintains nominal control, while the NeuroDOB (System 2) monitors the residual error and injects reflective compensation.
The observed reduction of RMSE by over 80\% indicates that the interaction between intuitive (model-based) and reflective (data-driven) reasoning yields synergistic control performance. \\

\subsubsection{Simulation Case 2}

Fig~\ref{fig:lqr_vs_neurodob_control_input_map2} shows the time series of control inputs from the baseline LQR controller and the LQR controller augmented with NeuroDOB compensation on Road Map 2, which was not used during training. The LQR input remains relatively stable on straight sections but exhibits overshoot at the beginning of curves. In contrast, the NeuroDOB-compensated control input shows slight overshoot as well, but performs comparatively dynamic and fine-grained adjustments in sharp curves and path transition regions, effectively reflecting the road environment. This demonstrates NeuroDOB’s adaptive capability to generalize the learned compensation signals to novel road scenarios where pure model-based controllers show limitations, thereby improving lateral vehicle control. The figure visually highlights how data-driven compensation contributes to enhanced controller adaptability and reduced tracking error under more challenging driving conditions.

\begin{figure}[!ht]
    \centering
    \includegraphics[width=0.45\textwidth]{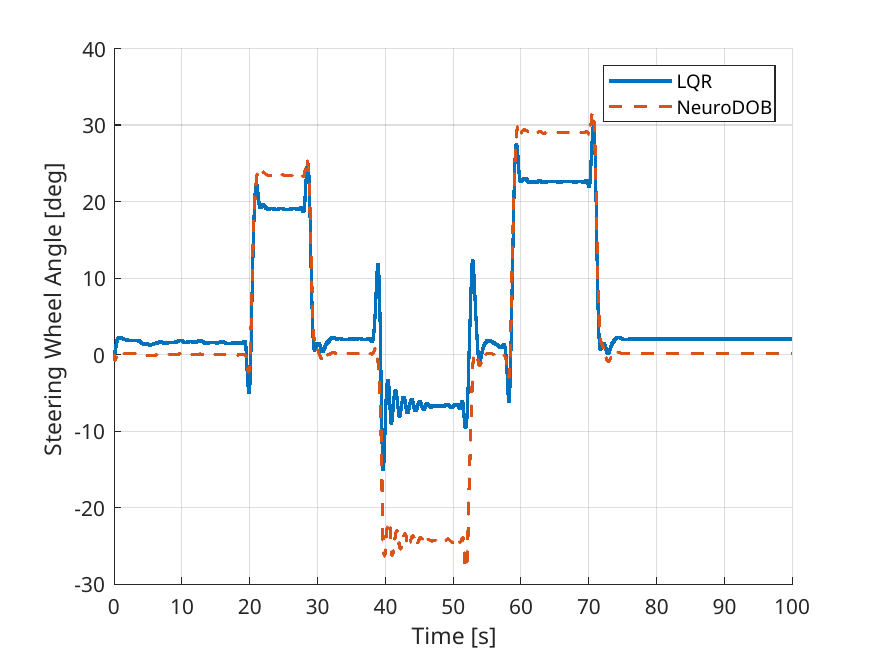}
    \caption{Steering input comparison between the LQR controller and NeuroDOB-compensated controller on Road Map 2.}
    \label{fig:lqr_vs_neurodob_control_input_map2}
\end{figure}

Fig~\ref{fig:roadmap2c0c1} presents a unified comparison of the time series of $e_y$ and $e_\psi$ for the CarSim embedded driver, the pure LQR controller, and the LQR controller with NeuroDOB compensation on Road Map~2, which was not included in the training data.

\begin{figure}[!ht]
    \centering
    \includegraphics[width=0.45\textwidth]{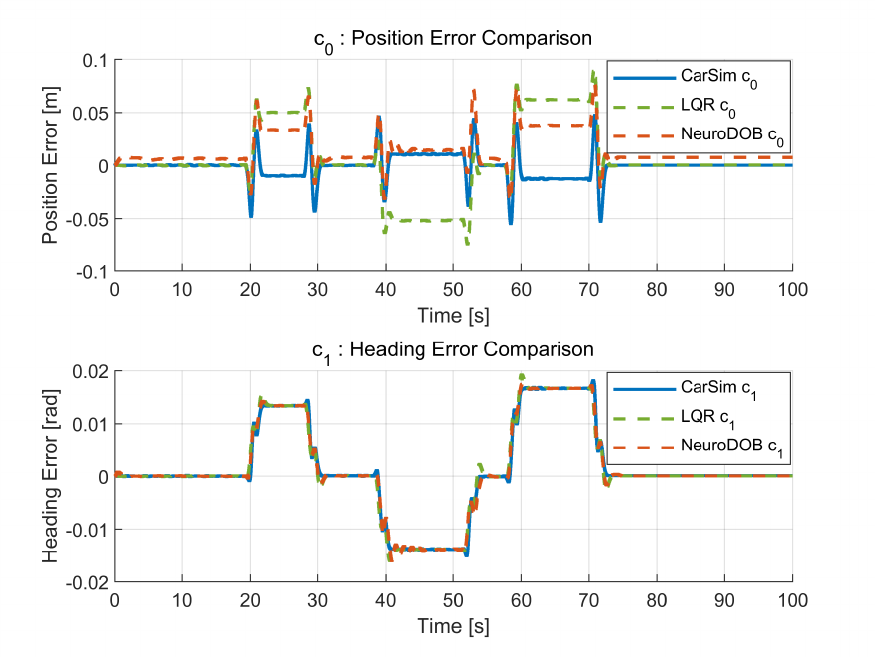}
    \caption{Comparison of $e_y$ and $e_\psi$ for CarSim embedded driver, LQR, and NeuroDOB controllers on Road Map 2.}
    \label{fig:roadmap2c0c1}
\end{figure}

The CarSim embedded driver acts as a surrogate skilled human driver, maintaining generally low error levels but exhibiting sharp steering adjustments, particularly at curve entry points, in response to the significantly increased curvature variation and environmental complexity of the test map. The pure LQR controller maintains relatively stable control performance on straight road segments. However, during sharp curves and path transitions, both the error magnitude and oscillations increase, revealing its limitations in adapting to unmodeled vehicle dynamics and complex road geometries.

The LQR controller with NeuroDOB compensation achieves tracking performance closer to the CarSim reference trajectory, reducing both the magnitude and oscillations of the errors compared to the pure LQR controller and enhancing adaptability to novel road conditions. Nevertheless, some discrepancies from the CarSim trajectory remain, indicating the structural limitation of relying on a single trained neural network. 

\begin{table}[!ht]
\caption{Simulation Case 2 RMSE Comparison}
\centering
\captionsetup{justification=centering}
\small
\begin{tabular}{|l|c|c|c|}
\hline
 & \textbf{LQR} & \textbf{NeuroDOB} & \textbf{Change} \\
\hline
$e_y$ [m] & 0.0374 & 0.0225 & 39.93\% $\Downarrow$ \\
\hline
$e_\psi$ [rad] & 0.0083 & 0.0083 & 0.0\%  \\
\hline
\end{tabular}
\label{tab:road map 2 rmse_comparison}
\end{table}

Table~\ref{tab:road map 2 rmse_comparison} presents RMSE comparison between the conventional LQR controller and the proposed NeuroDOB controller for Road Map~2, which was not included in the training data. For $e_y$, NeuroDOB achieves a reduction of approximately 39.9\% in RMSE compared to the pure LQR controller, indicating a moderate improvement in tracking accuracy under novel road geometry. In contrast, $e_\psi$ shows no change in RMSE, suggesting that the proposed method maintains yaw stability performance equivalent to that of the baseline LQR on this test map.

These results demonstrate that while NeuroDOB still provides tangible improvement in lateral path tracking for previously unseen road conditions, the gain is smaller than that observed on Road Map~1. This performance gap is primarily attributed to the curvature distribution differences between training and test maps, as discussed in the following section. Nevertheless, the findings confirm that the NeuroDOB compensation mechanism retains its ability to enhance lateral control accuracy without degrading yaw stability, even in more complex and unfamiliar driving scenarios.

Fig.~\ref{fig:curvature_histogram} shows the road curvature histograms for Road Map 1, which was used for training the NeuroDOB, and Road Map 2, to which the trained NeuroDOB was directly applied without additional retraining. It can be clearly observed that there is a significant difference in curvature distributions between the two maps. Road Map 1 features a wide and complex distribution with frequent and sharp curvature changes, whereas Road Map 2 exhibits a relatively limited curvature range concentrated within specific intervals.

\begin{figure}[!ht]
    \centering
    \includegraphics[width=0.5\textwidth]{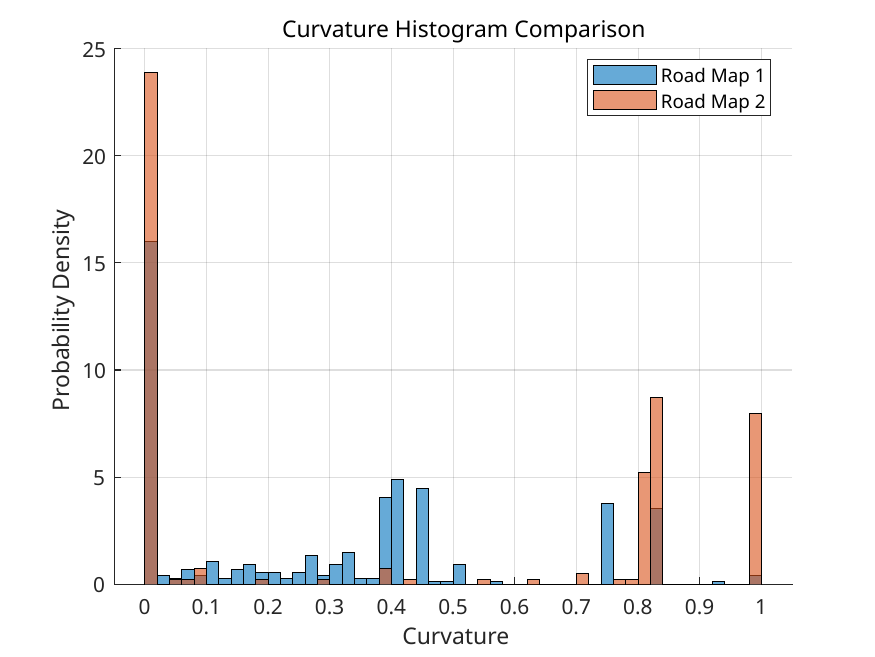}
    \caption{Comparison of curvature distributions between Road Map 1 and Road Map 2.}
    \label{fig:curvature_histogram}
\end{figure}

This difference in curvature distributions explains why the NeuroDOB trained only on Road Map 1 cannot fully replicate the control behavior of the CarSim embedded controller on Road Map 2. In other words, the lack of diversity in the training data limits the model’s generalization capability and causes performance degradation under curvature conditions that were not encountered during training.

Accordingly, we present additional experimental results where the NeuroDOB is trained on a track whose road geometry and curvature characteristics are similar, though not identical in distribution and length, to Road Map 2, and then the trained model is applied to driving on Road Map 2.
\\
\subsubsection{Simulation Case 3}
To evaluate the effect of road curvature and geometric similarity on the performance of NeuroDOB, a new track designated as Road Map 3 was constructed for experimental validation. While Road Map 1 and Road Map 2 exhibit distinct differences in curvature distributions and road geometries, Road Map 3 shows only slight variations in road shape and curvature characteristics compared to Road Map 2. In this experiment, NeuroDOB was trained using data collected from Road Map 3, and the trained model was then applied to driving on Road Map 2 to evaluate its performance. This approach aims to analyze the influence of training data diversity and curvature distribution on NeuroDOB’s learning efficiency and generalization capability.

Fig.~\ref{fig:lqr_vs_neurodob_control_input_map3} shows the steering input comparison between the baseline LQR controller and the NeuroDOB-compensated controller on Road Map 3. The results demonstrate that the NeuroDOB-augmented controller produces steering input patterns more closely aligned with those of the CarSim embedded driver controller, indicating enhanced adaptation to driver behavior and road conditions.

\begin{figure}[!ht]
    \centering
    \includegraphics[width=0.45\textwidth]{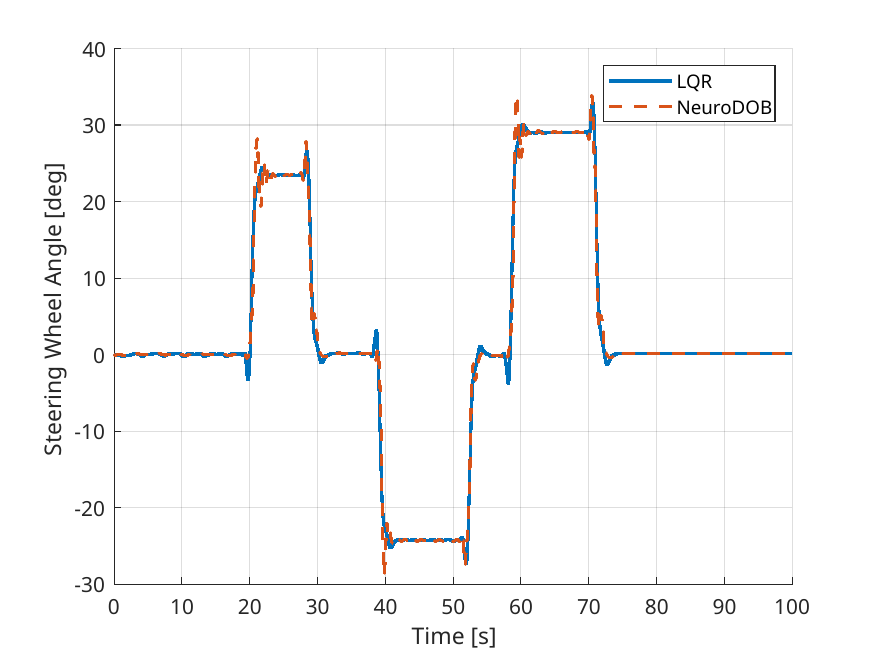}
    \caption{Steering input comparison between the LQR controller and NeuroDOB-compensated controller on Road Map 3.}
    \label{fig:lqr_vs_neurodob_control_input_map3}
\end{figure}

Fig.~\ref{fig:roadmap3c0c1} compares the time series of $e_y$ and $e_\psi$ for three controllers—the CarSim embedded driver, pure LQR, and LQR with NeuroDOB compensation—on Road Map 3. 

\begin{figure}[!ht]
    \centering
    \includegraphics[width=0.45\textwidth]{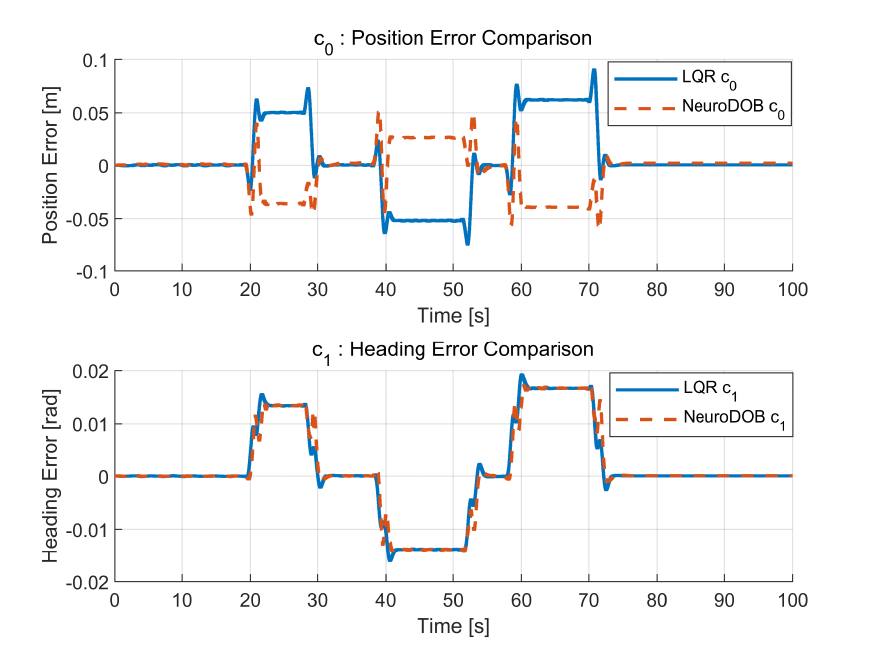}
    \caption{Comparison of $e_y$ and $e_\psi$ for CarSim embedded driver, LQR, and NeuroDOB controllers on Road Map 3.}
    \label{fig:roadmap3c0c1}
\end{figure}

Although the road geometry differs from the training conditions and thus the NeuroDOB-compensated controller does not achieve perfect tracking, it shows a clear improvement over the pure LQR controller. This result indicates a tendency of the NeuroDOB controller to mimic the driving behavior of the CarSim embedded driver. It also underscores the importance of training data diversity and curvature distribution in the learning process, demonstrating that NeuroDOB can generalize and adapt even under slight variations in road conditions. Overall, this analysis supports the effectiveness of NeuroDOB in enhancing lateral control performance beyond the baseline LQR.

\begin{table}[!ht]
\caption{Whole Simulation Cases RMSE Comparison}
\centering
\captionsetup{justification=centering}
\small
\resizebox{\columnwidth}{!}{%
\begin{tabular}{|c|c|c|c|c|}
\hline
\textbf{Simulation} & \textbf{State} & \textbf{LQR} & \textbf{NeuroDOB} & \textbf{Change} \\
\hline
\multirow{2}{*}{Case 1} 
    & $e_y$ [m] & 0.1096 & 0.0150 & 86.31\% $\Downarrow$ \\ \cline{2-5}
    & $e_\psi$ [rad] & 0.0232 & 0.0232 & 0.0\% \\
\hline
\multirow{2}{*}{Case 2}
    & $e_y$ [m] & 0.0374 & 0.0225 & 39.93\% $\Downarrow$ \\ \cline{2-5}
    & $e_\psi$ [rad] & 0.0083 & 0.0083 & 0.0\%  \\
\hline
\multirow{2}{*}{Case 3}
    & $e_y$ [m] & 0.0374 & 0.0173 & 53.64\% $\Downarrow$ \\ \cline{2-5}
    & $e_\psi$ [rad] & 0.0083 & 0.0083 & 0.0\%  \\
\hline
\end{tabular}
}
\label{tab:Three_Road_Map_rmse_comparison}
\end{table}

Table~\ref{tab:Three_Road_Map_rmse_comparison} summarizes the RMSE comparison between the conventional LQR controller and the proposed NeuroDOB controller across three different road maps. The first experiment involved training and testing NeuroDOB on Road Map 1, where it achieved approximately 86.31\% improvement in $e_y$. The second experiment evaluated the trained NeuroDOB from the first experiment on a different road map, Road Map 2, resulting in about 39.93\% improvement in $e_y$. However, this improvement was relatively smaller compared to that of the first experiment. This difference is attributed to variations in the distribution of training data and curvature of the evaluation roads, leading to a third experiment conducted for further analysis.

In the case 3, NeuroDOB was trained on Road Map 3, which features geometric and curvature differences from Road Map 2, and tested on Road Map 2. This resulted in approximately 53.64\% improvement in $e_y$, while $e_\psi$ remained stable without significant fluctuations. These results indicate that NeuroDOB effectively enhances lateral path tracking even in environments with slight deviations from training data while maintaining $e_\psi$ stability. Furthermore, compared to the previously reported results on Road Map 2, the improvement in $e_y$ on Road Map 3 was more pronounced, emphasizing the critical role of training data distribution and road curvature representation in achieving robust generalization.

\section{Experiment setup and Result}
To evaluate and verify the applicability of NeuroDOB in real vehicles and its effective operation under actual driving conditions, experiments were conducted utilizing driving data collected from an actual vehicle. This approach allowed us to confirm that the theoretical performance achieved in simulation environments is sustained during real-world road driving, and to empirically demonstrate that NeuroDOB accurately reflects driver-specific characteristics and dynamic changes in vehicle behavior across various driving scenarios, thereby successfully achieving personalized lateral control. For validation, a comparative analysis was performed among the pure LQR controller, the LQR controller with NeuroDOB integration, and the actual driver’s steering input.

\subsection{Experiment setup}
To conduct actual vehicle experiment, a Kia Niro was used as the test platform, as shown in Fig.~\ref{fig:niro}. 

\begin{figure}[!ht]
    \centering
    \includegraphics[width=0.45\textwidth]{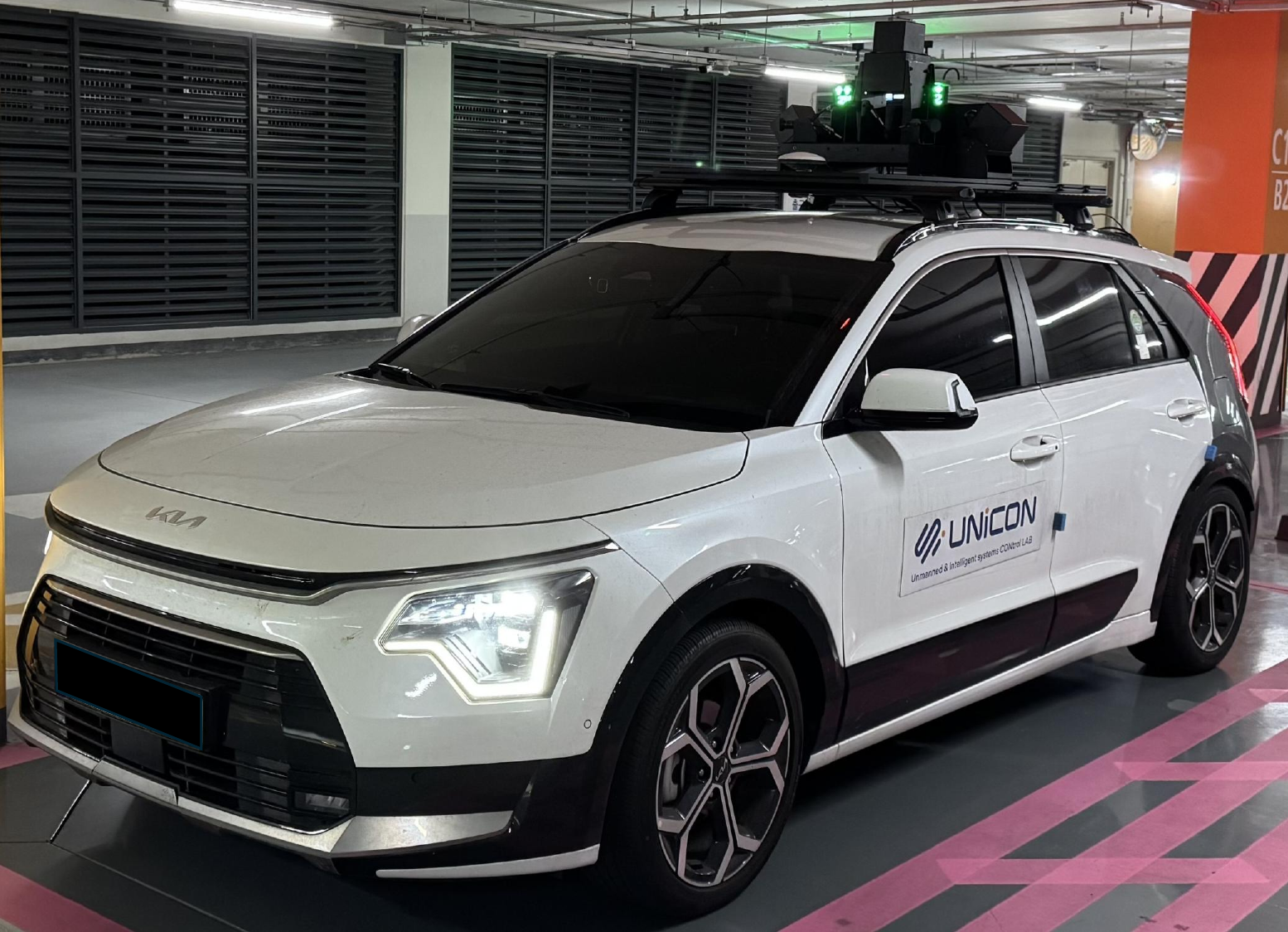}
    \caption{Vehicle used in the experiment.}
    \label{fig:niro}
\end{figure}

The tests were carried out on a highway section at speeds between 80 and 90 km/h, the actual road where the driving test was performed is shown in Fig.~\ref{fig:road}. 

\begin{figure}[!ht]
    \centering
    \includegraphics[width=0.45\textwidth]{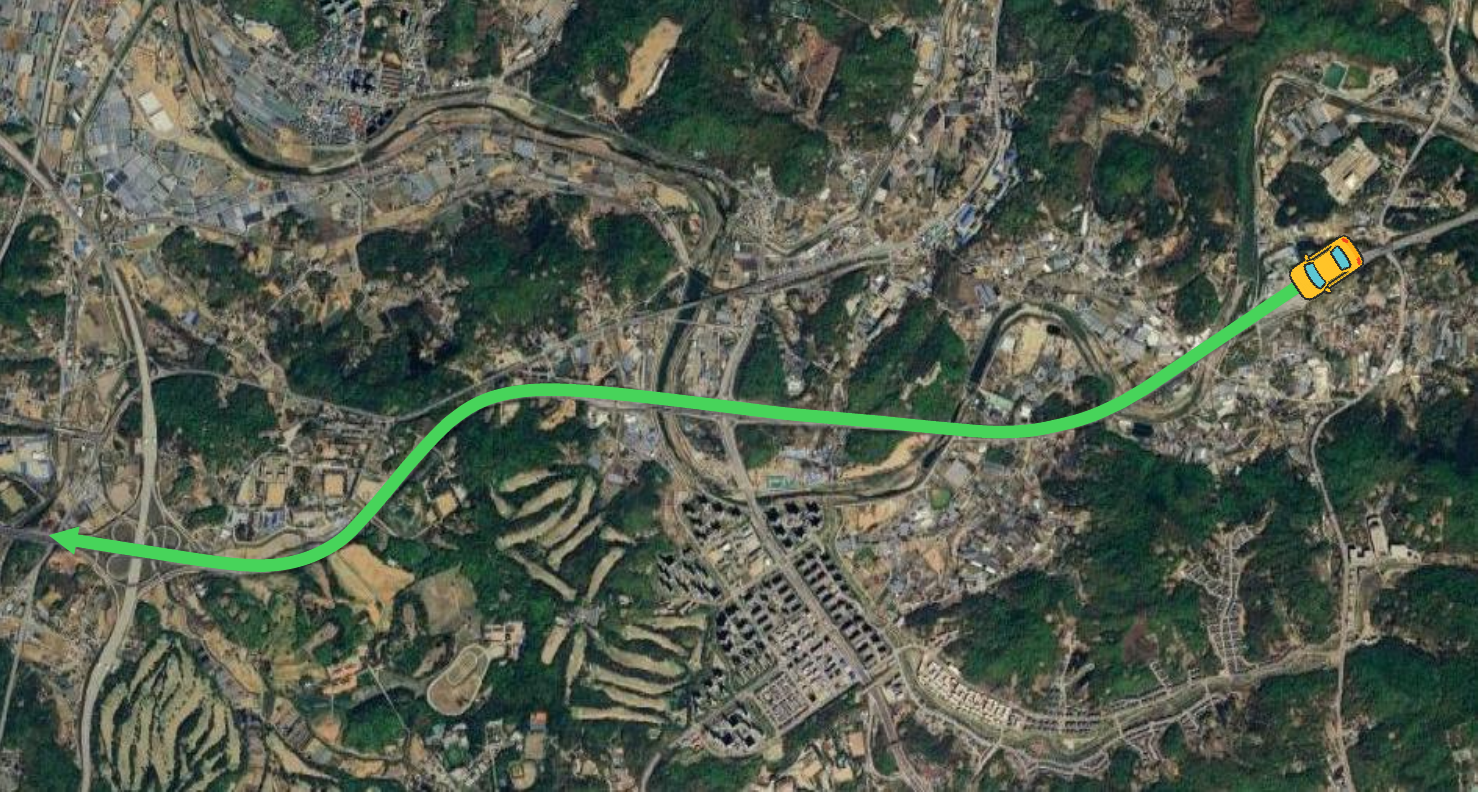}
    \caption{Vehicle used in the experiment.}
    \label{fig:road}
\end{figure}

Lane information was collected using a camera mounted on the vehicle's windshield, and this data was acquired in real-time by a PC via CAN communication at a sampling interval of 0.05 seconds. Additionally, the $\delta_d$ and $V_x$ were recorded from each sensor on the same PC with a sampling interval of 0.01 seconds. The data acquisition system setup is summarized in Fig.~\ref{fig:data_logging}.  

\begin{figure}[!ht]
    \centering
    \includegraphics[width=0.45\textwidth]{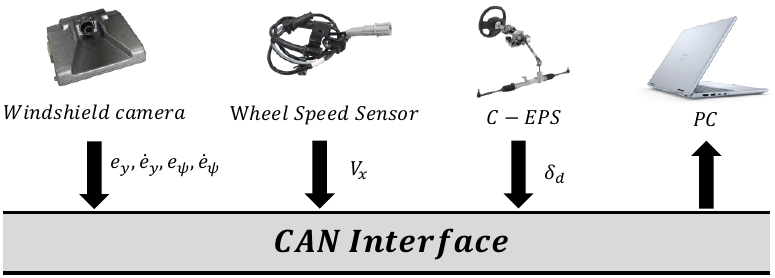}
    \caption{Vehicle data collection and recording architecture.}
    \label{fig:data_logging}
\end{figure}

Additionally, continuous driving data spanning 5 minutes, including straight and gentle curve segments, were gathered and used for training and validating the NeuroDOB.

\subsection{Experiment result}
The experiment was conducted using actual vehicle data collected during real driving conditions. The analysis compared the steering wheel angle generated by a pure LQR controller, a NeuroDOB compensated controller, and the steering wheel angle inputted by the actual driver. The results of this comparison are depicted in Fig.~\ref{fig:niro steering input}.

\begin{figure}[!ht]
    \centering
    \includegraphics[width=0.45\textwidth]{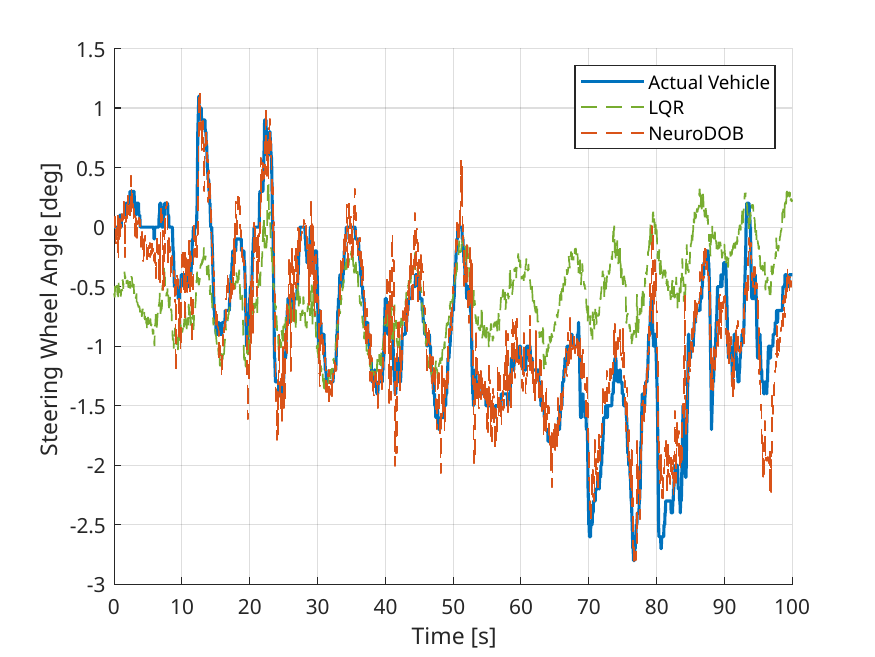}
    \caption{Steering input comparison between the LQR controller and NeuroDOB-compensated controller on actual vehicle environment.}
    \label{fig:niro steering input}
\end{figure}

Fig.~\ref{fig:niro steering input} compares steering inputs from the human driver, the pure LQR controller, and the NeuroDOB-enhanced LQR controller during a highway drive. Since most of the route consists of straight and gently curved sections, overall steering angles remain small. Between 15s and 50s, the pure LQR tracks the driver’s input reasonably well, but exhibits noticeable lag and overcompensation in the opposite direction during the initial (0–15s) and final (50–100s) segments. In contrast, the NeuroDOB-enhanced LQR consistently follows the driver’s steering across the entire 100s interval, aligning almost perfectly during curves without deviation. These results demonstrate that NeuroDOB effectively learns and reproduces driver-specific steering patterns in real time. 
Additionally, to quantitatively validate NeuroDOB’s performance, the RMSE of the steering inputs for the pure LQR controller, and the NeuroDOB-enhanced LQR controller was compared, and the results are presented in Table~\ref{tab:niro_rmse_comparison}.

\begin{table}[thb]
\caption{Actual Vehicle Experiment RMSE Comparison}
\centering
\captionsetup{justification=centering}
\small
\begin{tabular}{|l|c|c|c|}
\hline
\textbf{} & \textbf{LQR} & \textbf{NeuroDOB} & \textbf{Change} \\
\hline
$\delta_{\mathrm{d}}$ & $0.7817$ & $0.2518$ & $67.79\% \Downarrow$ \\
\hline
\end{tabular}
\label{tab:niro_rmse_comparison}
\end{table}

Table~\ref{tab:niro_rmse_comparison} shows that the NeuroDOB-enhanced LQR controller reduced the steering input RMSE from 0.7817 to 0.2518, a 67.79\% decrease compared to the pure LQR. This significant error reduction indicates that NeuroDOB compensation substantially improves control performance by accurately reproducing the driver’s fine steering intent. The higher RMSE of the pure LQR reflects its inherent model mismatch and inability to capture individual driver characteristics, resulting in persistent errors. In contrast, NeuroDOB compensates for these unmodeled dynamics and driver-specific differences in real time, enabling more precise lateral control. The dramatic decrease in RMSE confirms that NeuroDOB reduces both phase delay and amplitude error, thus dramatically enhancing path-tracking performance in real-vehicle conditions.

\section{Conclusion}
This study proposed an LQR control framework integrated with a NeuroDOB to enhance lateral vehicle trajectory tracking performance. NeuroDOB effectively compensates for unmodeled vehicle dynamics and driver-specific characteristics, significantly improving tracking accuracy compared to the baseline LQR controller.

Experimental results on two different road environments demonstrated that the application of NeuroDOB yielded more than an 86\% reduction in the RMSE of $e_y$ while maintaining yaw stability. Notably, on the unseen Road Map 2, which features increased curvature variability and more complex road segments, NeuroDOB showed improved steering control compared to pure model-based control.

However, limitations due to domain shifts between the training and testing road geometries were revealed through curvature histogram analysis. The inability of a single NeuroDOB model trained on Road Map 1 to perfectly replicate the control performance on Road Map 2 is attributed to the differences in curvature distributions between the two roads. This challenge motivates the need for enhanced diversity in training data and the adoption of modular learning approaches such as Sparse Identification of Nonlinear Dynamical Systems (SINDy), which allows for localized and partial learning.

Additionally, actual vehicle experiments were conducted using a Kia Niro platform in highway driving conditions at speeds of 90–100 km/h. The NeuroDOB enhanced LQR controller reduced the steering RMSE by 67.79\% compared to the pure LQR controller and significantly improved lane-keeping performance during driving. This validates the practical applicability of NeuroDOB in real driving environments and its capability for personalized steering pattern replication for individual drivers.

Future work will focus on expanding the training dataset to encompass a wider variety of road geometries and exploring hybrid model architectures based on SINDy to improve controller robustness and adaptability. Additionally, real-vehicle experiments based on actual driving data will be conducted to verify the practical applicability of the proposed method. This research contributes to the development of data-driven, personalized, and reliable lateral control solutions applicable to complex and diverse real-world driving scenarios. 

Conceptually, NeuroDOB demonstrates a control-level embodiment of dual-system intelligence. The coexistence of a fast, analytic System 1 (LQR) and a slow, adaptive System 2 (NeuroDOB) provides both robustness and personalization, suggesting a pathway toward cognitive-inspired vehicle controllers.

\vfill

\end{document}